\begin{document}

\newtheorem{definition}{Definition}
\newtheorem{lemma}{Lemma}
\newtheorem{corollary}{Corollary}
\newtheorem{theorem}{Theorem}
\newtheorem{example}{Example}
\newtheorem{proposition}{Proposition}
\newtheorem{remark}{Remark}
\newtheorem{assumption}{Assumption}
\newtheorem{corrolary}{Corrolary}
\newtheorem{property}{Property}
\newtheorem{ex}{EX}
\newtheorem{problem}{Problem}
\newcommand{\argmin}{\arg\!\min}
\newcommand{\argmax}{\arg\!\max}
\newcommand{\st}{\text{s.t.}}
\newcommand \dd[1]  { \,\textrm d{#1}  }

\title{\Large\bf Barrier Certificate based Safe Control for LiDAR-based Systems under Sensor Faults and Attacks}

\author{Hongchao Zhang, Shiyu Cheng, Luyao Niu, and Andrew Clark %
\thanks{Hongchao Zhang, Shiyu Cheng, and Andrew Clark are with the Electrical and Systems Engineering Department, McKelvey School of Engineering, Washington University in St. Louis, St. Louis, MO 63130.
{\tt\small \{hongchao,cheng.shiyu,andrewclark\}@wustl.edu}}
\thanks{Luyao Niu is with Department of Electrical and Computer Engineering, University of Washington, Seattle, WA 98195.
{\tt\small \{luyaoniu\}@uw.edu}}
\thanks{This work was supported by the National Science Foundation via grant CNS-1941670.}
}
\thispagestyle{empty}
\pagestyle{empty}

\maketitle

\begin{abstract}
Autonomous Cyber-Physical Systems (CPS) fuse proprioceptive sensors such as GPS and exteroceptive sensors including Light Detection and Ranging (LiDAR) and cameras for state estimation and environmental observation. It has been shown that both types of sensors can be compromised by malicious attacks, leading to unacceptable safety violations.
We study the problem of safety-critical control of a LiDAR-based system under sensor faults and attacks. We propose a framework consisting of fault tolerant estimation and fault tolerant control.
The former reconstructs a LiDAR scan with state estimations, and excludes the possible faulty estimations that are not aligned with LiDAR measurements. We also verify the correctness of LiDAR scans by comparing them with the reconstructed ones and removing the possibly compromised sector in the scan.
Fault tolerant control computes a control signal with the remaining estimations at each time step. 
We prove that the synthesized control input guarantees system safety using control barrier certificates.
We validate our proposed framework using a UAV delivery system in an urban environment. We show that our proposed approach guarantees safety for the UAV whereas a baseline fails.
\end{abstract}

\section{Introduction}\label{sec:intro}

Autonomous Cyber-Physical Systems (CPS) are expected to satisfy safety property in different applications \cite{knight2002safety}.
Safety violations can lead to severe economic loss and catastrophic damage to systems as well as human operators \cite{knight2002safety}. When the system can perfectly observe its state and the surrounding environment, safe control methodologies have been proposed including control barrier function (CBF) \cite{ames2016control}, Hamilton-Jacobi-Bellman-Isaacs (HJI) equation \cite{tomlin1998conflict}, and finite-state abstraction-based \cite{girard2012controller} approaches.

In real-world applications, system states and the environment are measured by sensors. As the environment becomes increasingly complex, modern CPS utilize exteroceptive sensors including Light Detection and Ranging (LiDAR) and cameras to obtain richer perception of the operating space \cite{yeong2021sensor}. Fusion among the exteroceptive sensors and proprioceptive sensors such as GPS and odometer allows CPS to better understand the environment \cite{debeunne2020review} and ensure safe operation.

Sensors have been shown to be vulnerable to faults and malicious attacks, under which ensuring CPS safety becomes more challenging. The navigation sensors can be spoofed by an adversary to cause crashes of autonomous vehicles \cite{petit2014potential,kerns2014unmanned}. Reflections \cite{tatoglu2012point} and malicious attacks \cite{cao2019adversarial,shin2017illusion} targeting LiDAR sensors can create a compromised description of the environment. These false sensor measurements bias the CPS state estimation and observations over the environment, leading CPS to make erroneous control decisions and incur safety violations.

Modeling and detection of sensor faults and attacks have been extensively studied \cite{mo2010false,liu2011false,punvcochavr2014constrained}. Secure system state estimation using measurements from proprioceptive sensors has been investigated in \cite{shoukry2017secure,fawzi2014secure}. Closed-loop safety-critical control under sensor faults and attacks has been recently studied in \cite{niu2019lqg,clark2020control}. However, these approaches are applicable to CPS using only proprioceptive sensors. When exteroceptive sensors such as LiDAR are adopted by CPS, the impact of attacks on the output of the nonlinear filters used to process LiDAR measurements are not incorporated into the aforementioned safety-critical control designs \cite{niu2019lqg,clark2020control}, rendering them  less effective. 

In this paper, we study the problem of safety-critical control for a LiDAR-based system in the presence of sensor faults and attacks. 
We propose a fault tolerant safe control framework consisting of two components, namely fault tolerant estimation and fault tolerant control. 
Our proposed framework leverages the fact \cite{cao2019adversarial} that only a narrow sector (normally within $8^\circ$) of LiDAR scans can be compromised by an adversary. Using this insight, we select system state estimations and sectors in LiDAR scans simultaneously so that they are aligned, while removing untrusted state estimates. We then use the selected state estimations and LiDAR measurements to compute a control input with safety guarantees. We make the following specific contributions:
\begin{itemize}
    \item We propose a fault tolerant state estimation algorithm that is resilient to attacks against proprioceptive sensors and LiDAR measurements. Our approach reconstructs a simulated scan based on a  state  estimate and a precomputed map of the environment. We leverage this reconstruction to remove false sensor inputs as well as detect and remove spoofed LiDAR measurements.
    
    \item We propose a fault tolerant safe control design using control barrier certificates. We present a sum-of-squares program to compute a control barrier certificate, which verifies a given safety constraint in the presence of estimation errors due to noise and attacks. We prove bounds on the probability that our synthesized control input guarantees safety.
    \item We validate our proposed framework using a UAV delivery system equipped with multiple sensors including a LiDAR. We show that the UAV successfully avoids the obstacles when navigating in an urban environment using our synthesized control law, while crashes into the unsafe region using a baseline.
\end{itemize}

The remainder of this paper is organized as follows. Section \ref{sec:related} presents the related work. Section \ref{sec:formulation} presents the system model, threat model, and necessary background. Section \ref{sec:Frame_TSV} presents our proposed fault tolerant safe control framework along with its safety guarantee. Section \ref{sec:simulation} gives a numerical case study on a UAV delivery system. Section \ref{sec:conclusion} concludes the paper.




\section{Related Work}\label{sec:related}
Ensuring CPS safety has attracted extensive research attention. Typical approaches include finite-state abstraction \cite{girard2012controller}, HJI equation \cite{tomlin1998conflict}, and counterexample-guided synthesis \cite{frehse2008counterexample}. 
Barrier function-based approaches, which formulate the safety constraint as a linear inequality over the control input, have been proposed to guarantee safety for CPS  \cite{ames2016control,usevitch2020strong,xiao2022control}. 
These approaches are applicable to CPS estimating system state using proprioceptive sensors.

Safety-critical control for systems using exteroceptive sensors such as cameras and LiDAR have been recently investigated in \cite{dean2020robust,dean2020guaranteeing,dawson2022learning,xiao2022differentiable}.
CBFs designed for high-dimensional exteroceptive sensor measurements including measurement-robust CBF \cite{dean2020guaranteeing}, observation-based neural CBF \cite{dawson2022learning}, and differentiable CBFs for learning systems \cite{xiao2022differentiable} have been proposed to compute controllers with safety guarantees.



False data injection (FDI) attacks have been reported in different applications, including modern power systems \cite{liu2011false} and unmanned aerial vehicle (UAV) \cite{wei2014simulation}. To this end, modeling, mitigating, and detecting FDI have been studied in \cite{mo2010false,liu2011false,fawzi2014secure,shoukry2017secure,hosseinzadeh2022reference}. 
LiDAR sensors have been demonstrated to be vulnerable to spoofing attacks in \cite{shin2017illusion,liu2021seeing}. The authors of \cite{cao2019adversarial} designed attacks that are capable of injecting false points at different locations in the point cloud. In \cite{khazraeiresiliency}, a stealthy attack against a perception-based controller equipped with an anomaly detector were proposed.


The existing literature on safe control in the presence of FDI attacks mainly focuses on systems with proprioceptive sensors. In \cite{niu2019lqg}, a barrier certificate based approach is proposed to ensure safety and reachability under FDI attack. A fault tolerant CBF is introduced in \cite{clark2020control} to ensure joint safety and reachability under attacks targeting proprioceptive sensors. In \cite{hallyburton2021security}, the authors have demonstrated that camera and LiDAR fusion is secure against naive attacks. For systems under attacks targeting both proprioceptive and exteroceptive sensors, how to synthesize a safety-critical control has been less studied. 


\section{Problem Formulation}\label{sec:formulation}
In this section, we introduce the system and threat model. We then formulate the problem and give needed background. 

\subsection{System Dynamics and Observation Model}
Consider a discrete-time control-affine system given as:
\begin{equation}
    \label{eq:dynamic}
    x[k+1] = f(x[k]) + g(x[k]) u[k] + w[k]
\end{equation}
where $w[k]$ is a Gaussian process with mean  zero and autocorrelation function $R_{w}(k,k^{\prime}) = Q_{k}\delta(k-k^{\prime})$ with $\delta$ denoting the discrete-time delta function and $Q_{k}$ is a positive definite matrix.  We assume that there is a nominal controller $u=\pi(x)$, for some function $\pi: \mathcal{X} \rightarrow \mathbb{R}^{m}$.
We let $x[k]\in \mathcal{X} \subseteq \mathbb{R}^n$ denote the system state and $u[k] \in \mathbb{R}^m$ denote a control signal at time $k$. Functions $f:\mathbb{R}^n\rightarrow\mathbb{R}^n$ and $g:\mathbb{R}^{n}\rightarrow\mathbb{R}^{n\times m}$ are assumed to be Lipschitz continuous.

System \eqref{eq:dynamic} uses a set of sensors $I_p:=\{1,\ldots,n_{p}\}$ to measure its states with observation $y[k] \in \mathbb{R}^{z}$ following the dynamics described as:
\begin{equation}
    \label{eq:observation}
    y[k] = o(x[k]) + v[k],
\end{equation}
where $o:\mathbb{R}^n\rightarrow\mathbb{R}^z$ is the observation function, $v[k]$ is an independent Gaussian process with mean identically zero and autocorrelation function $R_{v}[k,k^{\prime}] = R_{k}\delta(k-k^{\prime})$ and $R_{k}$ is a positive definite matrix. 

The system is equipped with a LiDAR sensor that observes the environment by calculating the ranges and angles to objects. 
A LiDAR sensor fires and collects $n_s$ laser beams to construct a scan $S:=\{(s^r_i, s^a_i),\ 0\leq i\leq n_s\}$, where $s_i^r$ denotes the range of the $i$-th scan, and $s_i^a$ denotes the angle of the $i$-th scan. 
We denote the Cartesian translated LiDAR scan $S$ measured at pose $x$ as $\mathcal{O}(x,S)$. 


We assume a 2D point-cloud map $\mathcal{M}$ is known by the system as prior knowledge. The map $\mathcal{M}:=\{(m^x_i,m^y_i),\ 0\leq i\leq n_{\mathcal{M}}\}$ is a collection of $n_{\mathcal{M}}$ points with tuples of object positions $(m^x_i,m^y_i)$ in the world coordinate.




\subsection{Threat Model}
\label{subsec:threat}
We assume that there exists an adversary that aims to cause  collisions or other unsafe behaviors.
The adversary has the capability to utilize any state-of-the-art spoofer for different sensors to conduct false data injection to perturb the observations. The injected false data denoted as $a$ can bias the system state estimation and cause the system to make incorrect control decisions.
We denote the perturbed observations as
\begin{equation}
    \label{eq:fdi_obs}
    \Bar{y}[k] = o(x[k]) + v[k] + a[k]. 
\end{equation}






The adversary can also compromise the LiDAR sensor by creating a near obstacle as demonstrated in \cite{cao2019adversarial}.
The adversary fires laser beams to inject several artificial points $e^{\prime}$ into a LiDAR scan. We denote the compromised LiDAR scan as $S \oplus e^{\prime}$, where $\oplus$ is a merge function introduced by \cite{cao2019adversarial}. 
However, due to the physical limitation of spoofer hardware, the injected point can only be within a very narrow spoofing angle, i.e. $8^{\circ}$ horizontal angle.

We index the LiDAR sensor as the $0$-th sensor and define $I=\{0\} \bigcup I_p$. 
We denote the set of sensors attacked by the adversary as $\mathcal{A}\subseteq I$. We assume that the system is uniformly observable from the  sensors in $I\backslash \mathcal{A}$. We assume that, at each time $k$, the support of $a[k]$ is contained in $\mathcal{A}$.



\subsection{Safety and Problem Formulation}



We define the state space $\mathcal{X}$ and a safety set $\mathcal{C}$ as  
\begin{equation*}
\mathcal{X} =\{x:h(x)\geq0\},\quad
\mathcal{C}= \{x\in\mathcal{X}:h_0(x)\geq0\},
\end{equation*}
where $h, h_0:\mathcal{X}\mapsto\mathbb{R}$. We say system \eqref{eq:dynamic} is safe with respect to $\mathcal{C}$ if $x[k]\in\mathcal{C}$ for all time $k= 0,1,\ldots$. 
We assume that the safe region $\mathcal{C}$ is pre-defined and known by the system, and the initial state of the system is safe, i.e. $x_0 \in \mathcal{C}$. 




\begin{problem}
\label{prob:main}
Given a map $\mathcal{M}$ and a safety set $\mathcal{C}$, we consider a nonlinear LiDAR-based system with dynamics \eqref{eq:dynamic} that is controlled by a nominal controller. The problem studied is to find a scheme to ensure system safety with desired probability $(1-\epsilon)$, where $\epsilon\in(0,1)$, when an adversary is present. 
\end{problem} 


\subsection{Preliminaries}
In what follows, we give background on discrete-time Extended Kalman Filter (EKF) and estimating pose from LiDAR scans

\subsubsection{DT-EKF}
For the system with dynamics \eqref{eq:dynamic} and observation \eqref{eq:observation}, the state estimate $\hat{x}$ is computed via EKF as:
\begin{equation}
\label{eq:dt-ekf}
    \hat{x}[k+1] = F(\hat{x}[k], u[k]) + K_k(y[k] - o(\hat{x}[k])),
\end{equation}
where $F(x[k],u[k])= f(x[k]) + g(x[k]) u[k]$.
The Kalman filter gain is 
\begin{equation}
\label{eq:ekfgain}
    K_k = A_{k} P_{k} C_{k}^{T}(C_{k} P_{k} C_{k}^{T}+R_{k})^{-1},
\end{equation}
where $A_{k} =\frac{\partial F}{\partial x}(\hat{x}[k], u[k]) $,   $C_{k} =\frac{\partial o}{\partial x}(\hat{x}[k])$, and $P_k$ is defined by the Riccati difference equation:
\begin{equation*}
    P_{k+1}=A_{k} P_{k} A_{k}^{T}+Q_{k}-K_{k}(C_{k} P_{k} C_{k}^{T}+R_{k}) K_{k}^{T}. 
\end{equation*}

The error bound of discrete-time EKF can be derived by Theorem 3.2 in \cite{reif1999stochastic} if Assumption \ref{assump:dtekf} holds.

\begin{assumption}
\label{assump:dtekf}
The system described by \eqref{eq:dynamic} and \eqref{eq:observation} satisfies the conditions: 
\begin{itemize}
    \item $A_{k}$ is nonsingular for every $k \geq 0$.
    \item There are positive real numbers $\bar{a}, \bar{c}, \underline{p}, \bar{p}>0$ such that the following bounds on various matrices are fulfilled for every $k \geq 0$ :
    \begin{equation*}
        \begin{aligned}
        &\left\|A_{k}\right\| \leq \bar{a}; \
        \left\|C_{k}\right\| \leq \bar{c}; \
        \underline{p} I \leq P_{k} \leq \bar{p} I; \\
        &\underline{q} I \leq Q_{k}; \ 
        \underline{r} I \leq R_{k} .
        \end{aligned}
    \end{equation*}
    \item Let $\phi$ and $\chi$ be defined as
    \begin{align*}
        &F(x[k],u[k]) - F(\hat{x}[k],u[k]) = A_k (x[k]-\hat{x}[k]) \\
        &\quad\quad\quad\quad\quad\quad\quad\quad+ \varphi(x[k], \hat{x}[k], u[k]) \\
        &o(x[k])-o(\hat{x}[k]) =C_{k}(x[k]-\hat{x}[k])+\chi(x[k], \hat{x}[k])
    \end{align*}
    Then there are positive real numbers $\epsilon_{\varphi}, \epsilon_{\chi}, \kappa_{\varphi}, \kappa_{\chi}>0$ such that the nonlinear functions $\varphi, \chi$ are bounded via
    \begin{equation*}
        \|\varphi(x, \hat{x}, u)\|  \leq \kappa_{\varphi}\|x-\hat{x}\|^{2},\quad
    \|\chi(x, \hat{x})\|  \leq \kappa_{\chi}\|x-\hat{x}\|^{2}
    \end{equation*}
    for $x, \hat{x} \in R^{n}$ with $\|x-\hat{x}\| \leq \epsilon_{\varphi}$ and $\|x-\hat{x}\| \leq \epsilon_{\chi}$, respectively.
\end{itemize}

\end{assumption}

If the conditions of Assumption 1 hold, the estimation error $\zeta_{k}=x[k]-\hat{x}[k]$ is exponentially bounded in mean square and bounded with probability one, provided that the initial estimation error satisfies $\left\|\zeta_{0}\right\| \leq \Bar{\zeta}$ \cite{reif1999stochastic}. 

\subsubsection{Estimating Pose By Comparing Scans}
Pose refers to the position of the system in a Cartesian coordinate frame. Pose estimations with LiDAR scans have been extensively studied. NDT~\cite{biber2003normal}, as one of the widely-used approaches, models the distribution of all reconstructed 2D-Points of one laser scan by a collection of local normal distributions. 

Consider two states $x_1,x_2\in\mathcal{X}$ and the LiDAR scans $\mathcal{O}(x_1,S_1)$ and $\mathcal{O}(x_2,S_2)$ collected at $x_1$ and $x_2$, respectively. The NDT method estimates the relative pose change as $r=\mathcal{O}(x_1,S_1) \ominus \mathcal{O}(x_2,S_2)$, where $\ominus$ is a scan match operation.
The scan match operation is implemented as follows. The NDT method first subdivides the surrounding space uniformly into cells with constant size. For each cell in $\mathcal{O}(x_1,S_1)$, the mean $q$ and the covariance matrix $\Sigma$ are computed to model the points contained in the cell as the normal distribution $N(q, \Sigma)$. 
Denote the points in $\mathcal{O}(x_2,S_2)$ as $p_i,\ i\in n_s$, where $p_i$ is a position vector and $n_s$ is the number of valid points.
Define loss function $\mathcal{L}_s(r^{\prime})$ as
\begin{equation}
    \label{eq:score}
    \mathcal{L}_s(r^{\prime}) = \sum_{i} \exp \left(\frac{-((p_i-r^{\prime})-{q}_{i})^{T} {\Sigma}_{i}^{-{1}}((p_i-r^{\prime})-{q}_{i})}{2}\right)
\end{equation}
The relative pose change $r^{\prime}$ is estimated by solving the minimization problem
\begin{equation}
\label{eq:ndt_obj}
    \min_{r^{\prime}}{-\mathcal{L}_s(r^{\prime})}
\end{equation}
with Newton's algorithm. We use $r$ to denote the solution to \eqref{eq:ndt_obj} for the rest of the paper. The corresponding loss $\mathcal{L}_s(r)$ can be computed with the output of scan match $r$ by \eqref{eq:score}. 

\section{Fault Tolerant   Safe  Control Framework}\label{sec:Frame_TSV}

In this section, we propose a framework for safe control that is compatible with  existing LiDAR-based autonomous systems. We first give an overview and then describe each component in detail.


\subsection{Overview of Framework}
\label{subsec:FT-Frame}

We consider a system with dynamics \eqref{eq:dynamic} and observation model \eqref{eq:observation} in the presence of an adversary, as described in Section \ref{sec:formulation}.
To guarantee the system's safety under attacks, we propose a fault tolerant framework to ensure safety at each time step. The framework consists of two parts, namely \emph{fault tolerant estimation} and \emph{fault tolerant control}.

The idea of fault tolerant estimation is to exclude compromised sensors in $I_p$ by utilizing additional information contained in LiDAR sensor measurements. 
We maintain a set of state estimations $\hat{x}_i$ using EKF, where $i \in I_l\subseteq 2^{I_p}$ and each element of $i \in I_l$ is a collection of sensors in $I_p$ such that system \eqref{eq:dynamic} is uniformly observable from the sensors in $I_{l}$. 
As shown in Fig. \ref{fig:FT-est}, a fault tolerant estimation reconstructs a LiDAR observation, denoted as $\mathcal{O}(\hat{x}_i,\mathcal{M})$, for each state estimation $\hat{x}_i$. The reconstruction is achieved by simulating the scan process on knowledge map $\mathcal{M}$ with state estimate $\hat{x}$ being the center. 
We propose a fault tolerant LiDAR estimation to compare the estimated LiDAR scan $\mathcal{O}(\hat{x}_i,\mathcal{M})$ with the actual LiDAR measurement $\mathcal{O}(x,S)$. The comparison then provides a pose estimation. Using the pose estimation, our proposed fault tolerant state estimation excludes the conflicting state estimations, i.e., the state estimations that deviate from the LiDAR estimation. 

After excluding the conflicting state estimations using fault tolerant estimation, we then design fault tolerant safe control to ensure safety of the system at each time step. Fault tolerant safe control computes an input $u_o$ that does not deviate too far from the nominal controller $\pi(\hat{x}_i)$ for all $i$ given by the fault tolerant estimation. The safety of $u_o$ is certified by a discrete-time barrier certificate. 

In what follows, we describe the fault tolerant estimation in two-fold, that is fault tolerant LiDAR estimation (Section \ref{sec:ft-LiDAR est}) and fault tolerant state estimation (Section \ref{sec:ft-state est}).

\begin{figure}[]
    \centering
    \includegraphics[width=0.45\textwidth]{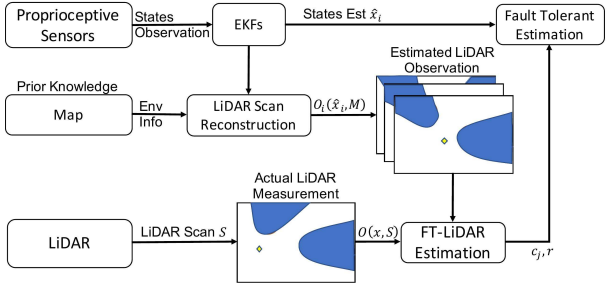}
    \caption{Fault tolerant estimation for LiDAR-based system removes conflicting state estimations by comparing estimations of proprioceptive sensors with additional information from exteroceptive sensors measurements. }
    \label{fig:FT-est}
\end{figure}




\begin{algorithm}[b]
\caption{LiDAR Scan Reconstruction}
\begin{algorithmic}[1]
    \State \textbf{Input:} State estimate $\hat{x}_i$, point-cloud map $\mathcal{M}$  
    \State \textbf{Parameters:}  Resolution of the LiDAR scan $c_{r}$, maximum LiDAR range $r_{max}$.  
    \State \textbf{Output:} Estimated LiDAR Observation $\mathcal{O}(\hat{x}_i,\mathcal{M})$
    \State \textbf{Init:} Set $\hat{x}_i$ as the center of scan $S_{\mathcal{M}}$, set $l_k^r\gets r_{max}$. Separate the scan equally into $\frac{2\pi}{c_{r}}$ sectors $S_{k}$ with corresponding angle $l_k^a$. 
    \State Translate points $m_j\in \mathcal{M}$ into polar coordinate with the origin $\hat{x}_i$, and represent it with a tuple $(m_j^r, m_i^a)$.
    \For{$m_j \in \mathcal{M}$ and $k\in [0,c_r]$}
            \If{$m_j \in S_{k}$ and $m_j^r\leq l_k^r$}
                \State $l_k^r \gets p_i^r$
            \EndIf
    \EndFor
    \For{$k \ \st \ l_k^r= r_{max}$}
        \State $l_k^r\gets NaN$
    \EndFor
    \State Reconstruct $S_{\mathcal{M}}=\{(l_k^r, l_k^a)\}$
    \State \textbf{Return} $\mathcal{O}(\hat{x}_i,\mathcal{M})=\mathcal{O}(\hat{x}_i,S_{\mathcal{M}})$
\end{algorithmic}
\label{alg:ScanReconstruct}
\end{algorithm}

\subsection{Fault Tolerant LiDAR Estimations}\label{sec:ft-LiDAR est}


In the following, we introduce fault tolerant LiDAR estimation. This procedure converts each state estimation $\hat{x}_i$ given by EKFs to an estimated LiDAR observation $\mathcal{O}(\hat{x}_i,\mathcal{M})$ using map $\mathcal{M}$. The estimated LiDAR observation is then compared with the actual LiDAR observation to exclude possible faults in state estimations.

Fault tolerant LiDAR estimation is presented in Alg. \ref{alg:ScanReconstruct}. Given parameters on the resolution of the LiDAR scan $c_{r}$ and maximum LiDAR range $r_{max}$, we initialize the estimated scan $S_{\mathcal{M}}=\{(l_k^r, l_k^a)\}$ with a circle centered at $\hat{x}_i$ and radius as $r_{max}$. 
We equally divide the circle and assign sectors $S_k$ to the corresponding $l_k^a$. 
Next, we represent the points in map $\mathcal{M}$ using polar coordinates with the origin at $\hat{x}_i$. To simulate the scan, we assign the closest point to the scan from line 6 to 10. We iterate through all points $m_j\in \mathcal{M}$. For the point  in sector $S_k$, we replace $l_k^r$ with $m_j^r$ if $m_j^r\leq l_k^r$. 
Then we remove the points that have never been updated. Finally, we output estimated observation $\mathcal{O}(\hat{x}_i,\mathcal{M})=\mathcal{O}(\hat{x}_i,S_{\mathcal{M}})$. Intuitively, this estimated observation can be viewed as the output of a LiDAR scan centered at state $\hat{x}_{i}$ with object locations given in the map $\mathcal{M}$. Hence, any deviation of the estimated and actual scans indicates either an error in the state estimate or a spoofing attack on the scan. 





Next, we consider the case where the adversary not only injects false data into prorioceptive sensors but also spoofs LiDAR sensors. 
The intuitive countermeasure is to remove the region of the scan  that is impacted by false data. Since the adversary is only capable of modifying points in the scan within a narrow spoofing angle, our approach is to partition the scan and map into regions $c_{j}$ and attempt to identify which region has been impacted by spoofing. That region is then removed from the scan and the estimated scan. 
Since the adversary tries to bias the state estimation, we model the problem of choosing a set of observations to ignore in order to mitigate the impact of false data  as a minimax optimization
\begin{equation}
    \min_{e^{\prime}_I}{\max_{c_j} {\mathcal{L}_s(\tilde{r})}}
\end{equation}
where $\tilde{r} = {\mathcal{O}(\hat{x}_{i}, \mathcal{M}\backslash c_j)\ominus \mathcal{O}(x,S\oplus e^{\prime}_I\backslash c_j)}$. 
We search for subdivision $c_j$ through the LiDAR observation space with Alg. \ref{alg:FT-LiDAR_Est}, which is detailed as follows.

\begin{algorithm}[b]
\caption{FT-LiDAR Estimation}
\begin{algorithmic}[1]
    \State \textbf{Input:} State estimation $\hat{x}$, number of sector $n_{j}$, Map $\mathcal{M}$ and LiDAR scan $S$
    \State \textbf{Output:} $r_j, c_j$
    \State \textbf{Init: } Equally separate scan $S$ into $n_{j}$ sectors $c_j\in S$
    \For{$c_j\in S$}
        \State Scan Reconstruction $\mathcal{O}_j(\hat{x},\mathcal{M}\backslash c_j)$
        \State Scan Reconstruction $\mathcal{O}_j(x,S\backslash c_j)$
        \State Compute $n_s^j$ the number of points in $S\backslash c_j$. 
        \State Compute $\tilde{r}_j=\mathcal{O}_j(\hat{x},\mathcal{M}\backslash c_j) \ominus \mathcal{O}_j(x,S\backslash c_j)$
        \State Compute $\zeta_s^j = n_s^j - \mathcal{L}_s(r_j)$
        \If{$\zeta_s^j\leq \bar{\zeta}_s$}
            \Return $\tilde{r}_j, c_j$
        \EndIf
    \EndFor
    
\end{algorithmic}
\label{alg:FT-LiDAR_Est}
\end{algorithm}

The adversary compromises the LiDAR scan $S$ by merging it with false data $e'_I$, denoted as $S\oplus e^{\prime}_I$. 
As shown in Alg. \ref{alg:FT-LiDAR_Est}, we take in state estimation $\hat{x}_i$, number of sectors $n_{j}$, map $\mathcal{M}$, and scan $S$ to search for sector $c_j$ over scan $S$. The algorithm outputs the corresponding estimated relative pose $\tilde{r}_j$. 
For each sector $c_j$, we estimate observations $\mathcal{O}(\hat{x}_{i}, \mathcal{M}\backslash c_j)$ with Alg. \ref{alg:ScanReconstruct} and reconstruct the corresponding LiDAR observation $\mathcal{O}(x,S\oplus e^{\prime}_I\backslash c_j)$. 
Next, we compute $n_s^j$, the number of points contained in $S\backslash c_j$, and perform scan match to obtain $\tilde{r}$ by
\begin{equation}
    \label{eq:ndt_lidspoof}
    \tilde{r}_j = \mathcal{O}(\hat{x}_{i}, \mathcal{M}\backslash c_j)\ominus \mathcal{O}(x,S\oplus e^{\prime}_I\backslash c_j). 
\end{equation}
Then, we compute the loss function $\mathcal{L}_s(\tilde{r})$ and the performance degradation $\zeta_s^j = n_s^j - \mathcal{L}_s(\tilde{r})$. 
Finally, we output $\tilde{r}_j$ and $c_j$ for $\zeta_s^j\leq \bar{\zeta}_s$. 
In what follows, we compute the upper bound $\bar{\zeta}_s$ of the degradation of the loss $\mathcal{L}_s$ brought by noise as the criteria of whether LiDAR sensor is affected by factors other than noise. 
We consider a point $p_i$ sampled in the LiDAR scan collected at state $x$ with a zero-mean disturbance $w_i$ whose norm is bounded as $\|w_i\|\leq \bar{w}_i$. 

\begin{theorem}
\label{th:scanmatch}
    Consider a state $x$ and its state estimation $\hat{x}$. 
    Let $\mathcal{O}(x,S)$ and $\mathcal{O}(\hat{x},\mathcal{M})$ be LiDAR scan and estimated LiDAR observation. 
    Let $r=\mathcal{O}(\hat{x}_{i}, \mathcal{M})\ominus \mathcal{O}(x,S)$ and $\tilde{r}$ be computed by \eqref{eq:ndt_lidspoof} when adversary present. 
    In the case where the LiDAR sensor is not attacked, we have the performance degradation $\zeta_s$ is bounded by 
    \begin{multline}
    \label{eq:scannoise}
        \zeta_s := \mathcal{L}_s^{max}(r)-\mathcal{L}_s(r) \\
        \leq n_s - \sum_{i} \exp \left(\frac{-\bar{w}^2_i \lambda({\Sigma}_{i}^{-{1}})}{2} \right) =: \bar{\zeta_s},
    \end{multline} 
    where $\mathcal{L}_s^{max}(r)$ is the maximum of \eqref{eq:score}, $n_s$ is the number of points contained in $S$, and $\lambda({\Sigma}_{i}^{-{1}})$ is the maximum eigenvalue of $\Sigma_{i}^{-{1}}$. 
    
    When the LiDAR sensor is attacked, if a subdivision $c_j\supseteq e^{\prime}_I$ can be found by Alg. \ref{alg:FT-LiDAR_Est}, we have the performance degradation of scan match is bounded as \eqref{eq:scannoise}, where $n_s$ is the number of points contained in $S\backslash c_j$ and the summation is over all points in $S \setminus c_j$. 
\end{theorem}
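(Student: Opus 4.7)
The plan is to exploit the Gaussian-mixture structure of the loss $\mathcal{L}_s$ in Eq.~\eqref{eq:score} and bound its degradation term-by-term. First, I would observe that each summand in \eqref{eq:score} has the form $\exp(-\tfrac{1}{2} z_i^{T}\Sigma_{i}^{-1} z_i)$ with $z_i := (p_i - r^\prime) - q_i$, which is upper bounded by $1$ and attains the value $1$ exactly when $z_i = 0$. Consequently $\mathcal{L}_s^{\max}(r) \le n_s$, and in the noise-free, attack-free setting this bound is tight because a perfect alignment $p_i - r = q_i$ for every $i$ can be realized. Thus $\zeta_s = n_s - \mathcal{L}_s(r)$, and the proof reduces to lower bounding $\mathcal{L}_s(r)$.

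For the unattacked case, at the optimal relative pose the residual $z_i$ coincides with the zero-mean sample disturbance $w_i$ injected into $p_i$. I would apply the spectral inequality
\[
w_i^{T}\Sigma_{i}^{-1}w_i \;\le\; \lambda(\Sigma_{i}^{-1})\,\|w_i\|^{2} \;\le\; \lambda(\Sigma_{i}^{-1})\,\bar{w}_{i}^{2},
\]
and use monotonicity of $t \mapsto e^{-t/2}$ on $\mathbb{R}_{\ge 0}$ to conclude
\[
\exp\!\left(-\tfrac{1}{2}w_i^{T}\Sigma_{i}^{-1}w_i\right) \;\ge\; \exp\!\left(-\tfrac{1}{2}\bar{w}_{i}^{2}\lambda(\Sigma_{i}^{-1})\right).
\]
Summing over $i$ and subtracting from $\mathcal{L}_s^{\max}(r)=n_s$ yields the bound $\bar{\zeta}_s$ exactly in the form of \eqref{eq:scannoise}.

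For the attacked case, the key observation is the minimax set-up underlying Alg.~\ref{alg:FT-LiDAR_Est}: if the algorithm returns a sector $c_j$ that \emph{contains} every injected point, i.e.\ $c_j \supseteq e^{\prime}_I$, then the restricted scan satisfies $(S \oplus e^{\prime}_I)\setminus c_j = S \setminus c_j$, and so the restricted map-scan pair used in \eqref{eq:ndt_lidspoof} is adversary-free. The preceding noise-only argument then applies verbatim, with the index set reduced to the points surviving in $S\setminus c_j$ and $n_s$ replaced by their cardinality; this produces the second assertion of the theorem.

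The main obstacle I anticipate is the first case's identification $\mathcal{L}_s^{\max}(r) = n_s$: one must argue that the maximum of the loss is independent of the data and equals the point count, which is clean if one first recognizes that the per-term supremum $1$ is simultaneously attainable in the noise-free setting (or at least that it serves as a valid upper bound even when not jointly attained, which is all the inequality requires). Once this is fixed, the remainder is a direct spectral-norm estimate and a termwise application of exponential monotonicity, and the attacked case follows by restricting the summation index to the unspoofed subset singled out by Alg.~\ref{alg:FT-LiDAR_Est}.
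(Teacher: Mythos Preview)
Your proposal is correct and follows essentially the same route as the paper: bound $\mathcal{L}_s^{\max}(r)$ above by $n_s$ using positive definiteness of $\Sigma_i$, identify the residual $(p_i-r)-q_i$ with the bounded disturbance $w_i$, apply the spectral estimate $w_i^{T}\Sigma_i^{-1}w_i\le \lambda(\Sigma_i^{-1})\bar{w}_i^{2}$ together with monotonicity of the exponential, and then reduce the attacked case to the unattacked one by noting $(S\oplus e'_I)\setminus c_j = S\setminus c_j$ when $c_j\supseteq e'_I$. The paper's proof additionally remarks that if Alg.~\ref{alg:FT-LiDAR_Est} happens to return some sector not containing $e'_I$, its termination criterion (line~10) already enforces $\zeta_s^j\le\bar{\zeta}_s$; but since the theorem is stated conditionally on $c_j\supseteq e'_I$, your treatment of the attacked case is sufficient as written.
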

\begin{proof}
We first show that $L_s^{max}(r) = n_s$. Then we derive a lower bound for $\mathcal{L}_s(r)$. 
Since covariance $\Sigma_i$ is positive definite, using \eqref{eq:score} we have
\begin{multline*}
    L_s^{max}(r)\\
    =\sum_i\exp \left(\frac{-((p_i-r)-{q}_{i})^{T} {\Sigma}_{i}^{-{1}}((p_i-r)-{q}_{i})}{2}\right) \\
    \leq\sum_i \exp{(0)} = n_s.
\end{multline*}
Let $p_i$ be a point sampled in LiDAR scan. We have that $((p_i-r)-{q}_{i})\leq =w_i$ with $w_i$ being the realized disturbance when sampling $p_i$. Since $\|w_i\|\leq \bar{w}_i$ and $\Sigma_i^{-1}$ is Hermitian, we then have
\begin{multline*}
    \sum_{i} \exp \left(\frac{-((p_i-r)-{q}_{i})^{T} {\Sigma}_{i}^{-{1}}((p_i-r)-{q}_{i})}{2}\right) \\
    \geq \sum_{i} \exp \left(\frac{-\bar{w}^2_i \lambda({\Sigma}_{i}^{-{1}})}{2} \right).
\end{multline*}
Hence, we have that $\zeta_s$ is bounded as \eqref{eq:scannoise}.
 

When the LiDAR sensor is spoofed, there always exists a subdivision $c_j$ such that the false data $e_I'$ satisfies $e^{\prime}_I\subseteq c_j$. If $c_j$ is successfully identified by Alg. \ref{alg:FT-LiDAR_Est}, then the subdivision $c_j$ along with the false data $e_I'$ are ignored. In this case, our analysis for the scenario where the LiDAR sensor is not attacked can be applied, yielding the bound in \eqref{eq:scannoise} with $n_s$ being the number of points contained in $S\setminus c_j$. If $c_j$ containing $e_I'$ is not identified and is not ignored, then by line 10 of Alg. \ref{alg:FT-LiDAR_Est}, we have that $\zeta_s^j\leq \bar{\zeta}_s$ and thus the bound in \eqref{eq:scannoise} follows.
\end{proof}

\subsection{Fault Tolerant State Estimation}\label{sec:ft-state est}

We next propose the criteria to develop an algorithm for a fault tolerant state estimation that provides bounded estimation error under false data attacks on the proprioceptive sensors. Our approach computes a set of indices $I_{a} \subseteq I_{l}$ that are removed to ensure that the state estimation error is bounded. 
Given a state estimation deviation threshold $\theta_h$ and a scan match degradation threshold $\bar{\zeta}_s$, a state estimate is \emph{not removed} (i.e. $i \notin I_{a}$) if either of the following criteria holds: 
\begin{itemize}
    \item Case I: $i \notin I_a$ for estimation indexed $i\in I_l$, if $\|r_i\|\leq \theta_h$ and $\zeta^i_s \leq \bar{\zeta}_s$.
    \item Case II: $i \notin I_a$ for estimation indexed $i\in I_l$, if $\|\tilde{r}_i\|\leq \theta_h$ and $\tilde{\zeta}_s^i \leq \bar{\zeta}_s$. 
\end{itemize}
We consider LiDAR observation is trusted, if for all $i\in I_l$ estimated LiDAR observation, the scan match degradation $\zeta^i_s \leq \bar{\zeta}_s$.
In Case I, we have the scan match degradation $\zeta^i_s \leq \bar{\zeta}_s$, and the pose deviation $\|r_i\|\leq \theta_h$. We draw the conclusion that $\hat{x}_{i}$ agrees with the LiDAR observation, and hence $i \in I\backslash I_a$. 
When the LiDAR observation is not trusted, we reconstruct estimated and actual LiDAR observation with Alg. \ref{alg:FT-LiDAR_Est} to exclude section $c_j$.
In Case II, we have the reconstructed scan match degradation $\|\tilde{r}_i\|\leq \theta_h$, and the pose deviation within tolerance with $\tilde{\zeta}_s^i \leq \bar{\zeta}^s$. We draw the conclusion that $i \in I\backslash I_a$. 

In what follows, we show that sensor $i\in I\backslash I_a$ selected by criteria is attack-free and we can further have the deviation of FT-Estimation bounded by the EKF error bound of selected sensors. 
\begin{theorem}
\label{th:ftest}
    Given scan match results $r_i$, $\tilde{r}_i$ and $\bar{\zeta}_s$, for sensor $i\in I\backslash I_a$ given by criteria I and II, we have estimation error bounded as $\|x-\hat{x}_i\|\leq \bar{\zeta}_i$.
\end{theorem}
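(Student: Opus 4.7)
The plan is to show that passing either criterion certifies the sensor subset $i$ as effectively attack-free, so that Assumption~\ref{assump:dtekf} applies to the EKF driven by the observations in $i$ and Theorem~3.2 of \cite{reif1999stochastic} delivers the claimed bound $\|x-\hat{x}_i\|\leq \bar{\zeta}_i$. The argument naturally splits along the two criteria.

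For Case~I, I would first invoke Theorem~\ref{th:scanmatch} in its contrapositive form: because $\zeta^i_s \leq \bar{\zeta}_s$ holds for every $i \in I_l$, any injection $e'$ into the raw LiDAR scan $S$ would be incompatible with the noise-only loss envelope $\bar{\zeta}_s$ in \eqref{eq:scannoise}, so the observed scan $\mathcal{O}(x,S)$ is uncompromised. Consequently the scan-match minimizer $r_i$ returned by \eqref{eq:ndt_obj} faithfully reports the relative pose between $\hat{x}_i$ and the true $x$, modulo the NDT residual that Theorem~\ref{th:scanmatch} already controls. The threshold inequality $\|r_i\|\leq\theta_h$ then caps this relative pose, showing that $\hat{x}_i$ is consistent with the LiDAR anchor and hence that the proprioceptive observations driving $\hat{x}_i$ cannot have been biased beyond the nominal EKF envelope. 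Applying the EKF error bound that follows from Assumption~\ref{assump:dtekf} then yields $\|x-\hat{x}_i\|\leq \bar{\zeta}_i$.

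For Case~II, the LiDAR scan is itself untrusted, so the proof routes through Alg.~\ref{alg:FT-LiDAR_Est}. By the second half of Theorem~\ref{th:scanmatch}, any sector $c_j$ returned by the algorithm with $\tilde{\zeta}^i_s\leq\bar{\zeta}_s$ necessarily contains the injected points $e'_I$ (otherwise the degradation would exceed $\bar{\zeta}_s$ on account of the spoofed points remaining in the pruned scan). Thus the pruned pair $\mathcal{O}(\hat{x}_i,\mathcal{M}\setminus c_j)$ and $\mathcal{O}(x,S\setminus c_j)$ is in the regime analyzed in Case~I, and the reconstructed scan-match $\tilde{r}_i$ with $\|\tilde{r}_i\|\leq\theta_h$ plays the role of $r_i$ there. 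Repeating the Case~I argument on the pruned scan delivers the same EKF bound.

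The hardest step will be converting the scalar NDT loss guarantee of Theorem~\ref{th:scanmatch} into a usable vector-norm bound on the pose minimizer and then into a bound on $\|x-\hat{x}_i\|$. This essentially requires a local strong-convexity or quadratic lower bound on $-\mathcal{L}_s$ near its minimum, whose constants depend on the spectra of the cell covariances $\Sigma_i$ and on the environmental geometry (regions of $\mathcal{M}$ that are nearly translation-symmetric make $r_i$ only weakly identifiable). A secondary obstacle is ruling out stealthy adversaries: one must argue that no attack can simultaneously push $\hat{x}_i$ outside $\bar{\zeta}_i$ of $x$ while keeping both $\|r_i\|\leq\theta_h$ and $\zeta^i_s\leq\bar{\zeta}_s$; the narrow-angle physical limitation on LiDAR spoofing is the geometric ingredient that makes such a simultaneous evasion impossible once sector-wise subdivisions are considered, and it will need to be invoked explicitly in the proof.
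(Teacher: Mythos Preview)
Your plan lands in the right neighborhood but takes a longer route than the paper, and in doing so creates work you do not need. The paper's proof is a short contradiction argument: it fixes $\theta_h = \min_i \bar{\zeta}_i$, assumes some $b \in I\setminus I_a$ has $\|x-\hat{x}_b\| > \bar{\zeta}_b$, and then argues that because the scan-match degradation is below $\bar{\zeta}_s$ the NDT output $r$ (respectively $\tilde{r}$ in Case~II) equals the true pose discrepancy $x-\hat{x}_b$; the criterion $\|r\|\le\theta_h\le\bar{\zeta}_b$ then immediately contradicts the hypothesis. No appeal to Assumption~\ref{assump:dtekf} or to \cite{reif1999stochastic} is made here: the constant $\bar{\zeta}_i$ enters through the choice of $\theta_h$, not by re-invoking EKF convergence.

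Your detour---certify $i$ as attack-free, then re-apply the EKF error bound---is logically coherent but redundant: once you grant that $r_i$ faithfully encodes $x-\hat{x}_i$, the inequality $\|r_i\|\le\theta_h$ already \emph{is} the pose bound, so the EKF theorem adds nothing. What your plan does buy is an honest accounting of the step the paper simply asserts, namely that small $\zeta_s$ forces $r=x-\hat{x}$; you correctly flag this as needing local strong convexity of $-\mathcal{L}_s$ and geometric identifiability of the map, and the paper does not supply that argument. So your proposal is more scrupulous about the analytic gap, but you should drop the EKF re-invocation and close the loop the way the paper does: set $\theta_h=\min_i\bar{\zeta}_i$ and read the bound directly off $\|r_i\|\le\theta_h$.
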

\begin{proof}
    We prove by contradiction. We suppose that there exists a sensor $b\in I\backslash I_a$, whose estimation $\hat{x}_b$ satisfies $\|x-\hat{x}_b\|> \bar{\zeta}_b$. 
    We next show contradictions for Case I and II.

    
    In Case I, set $\theta_h=\min_i{\bar{\zeta}_i}$. Since $\zeta^b_s \leq \bar{\zeta}_s$, we have that LiDAR scan matches with estimated scan with relative pose change $r=x-\hat{x}_b$. If sensor $b$ is included in $I\backslash I_a$, we have $\|x-\hat{x}_b\|\leq \theta_h\leq \bar{\zeta_b}$, which contradicts to $\|x-\hat{x}_b\|> \bar{\zeta}_b$.
    
    In Case II, set $\theta_h=\min_i{\bar{\zeta}_i}$. Since $\tilde{\zeta}^b_s \leq \bar{\zeta}_s$, we have that LiDAR scan matches with estimated scan with relative pose change $\tilde{r}=x-\hat{x}_b$. If sensor $b$ is included in $I\backslash I_a$, we have $\|x-\hat{x}_b\|\leq \theta_h\leq \bar{\zeta_b}$, which contradicts to $\|x-\hat{x}_b\|> \bar{\zeta}_b$.
    

    Otherwise, sensor $b$ will be excluded into set $I_a$ and hence for any sensor $i\in I\backslash I_a$ we have the error bounded. 
\end{proof}
\subsection{Fault-Tolerant Safe Control}\label{sec:CBC}


We next present the fault tolerant control synthesis to ensure safety of the system.  
We set the state estimation as $\hat{x}_{\alpha}[k]=\hat{x}_i$, for some $i\in I\backslash I_a$.
We define the control input signal as $u_o[k]=\pi(\hat{x}_\alpha[k])+\hat{u}[k]$. In what follows, we assume the nominal controller is of the form $\pi(x) = \pi_{0} + K_{c}\hat{x}_{\alpha}$ for some $\pi_{0} \in \mathbb{R}^{m}$ and matrix $K_{c}$. Since we have $||x[k]-\hat{x}_{\alpha}[k]|| \leq \bar{\zeta}_{\alpha}$ by Theorem \ref{th:ftest}, the nominal control input for the estimated state satisfies $$||\pi(\hat{x}_{\alpha}[k])-\pi(x[k])|| = ||K_{c}(x[k]-\hat{x}_{\alpha}[k])|| \leq ||K_{c}||\bar{\zeta}_{\alpha}.$$
Hence, if we choose $u_{o}[k]$ such that $||\hat{u}[k]||_{2} \leq \xi - ||K_{c}||\bar{\zeta}_{\alpha}$ for some $\xi \geq 0$, then we can guarantee that the chosen control input is within a bounded distance of the nominal control input corresponding to the true state value.

\begin{proposition}
\label{prop:discrete_bc}
    Consider a discrete-time system described by \eqref{eq:dynamic} and sets $\mathcal{C}, \mathcal{D} \subseteq \mathcal{X}$. If there exist a function $B: \mathcal{X} \rightarrow \mathbb{R}_{0}^{+}$, a constant $c \geq 0$, a linear controller $u = K_c x$, and a constant $\gamma \in[0,1)$ such that
    \begin{eqnarray}
        B(x) \leq \gamma, & \  \forall x \in \mathcal{C} \label{eq:prop1_1} \\
        B(x) \geq 1, & \  \forall x \in \mathcal{D} \label{eq:prop1_2}
        \\
        \begin{aligned}
            \mathbb{E}[B(f&(x) + g(x) (K_c x + \hat{u}) \\ &+ w) \mid x]  \leq B(x)+c,  \label{eq:prop1_3}
        \end{aligned}
        & \ \forall x \in \mathcal{X}, \forall \|\hat{u}\|\leq \xi
    \end{eqnarray}
    then for any initial state $x_{0} \in \mathcal{C}$, we have the $Pr(x[k]\in \mathcal{C}, 0\leq k\leq T_d)\geq 1-\gamma- c T_d$. 
\end{proposition}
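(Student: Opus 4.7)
The plan is to exhibit a nonnegative supermartingale that starts with small expectation when $x_0\in\mathcal{C}$ and is forced above the value $1$ whenever the trajectory enters the unsafe set, and then to invoke Ville's (Doob's) maximal inequality to bound the probability of the unsafe event. I read $\mathcal{D}$ as containing the complement $\mathcal{X}\setminus\mathcal{C}$ of the safe set, which is the standard barrier-certificate setup needed for a ``staying in $\mathcal{C}$'' conclusion.

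First, I would define the time-shifted process $M_k := B(x[k]) - c k + c T_d$ for $0\le k\le T_d$. Using condition \eqref{eq:prop1_3} along the closed-loop trajectory, which is legitimate because the synthesized $\hat{u}[k]$ was constructed to satisfy $\|\hat{u}[k]\|\le \xi$, one obtains
\[
\mathbb{E}\!\left[M_{k+1}\mid\mathcal{F}_k\right] \le B(x[k]) + c - c(k+1) + c T_d = M_k,
\]
so $(M_k)$ is a supermartingale with respect to the natural filtration $\mathcal{F}_k$ of $x[\cdot]$. Because $B\ge 0$ and $T_d-k\ge 0$ on the horizon, we also have $M_k\ge 0$.

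Second, I would apply the maximal inequality for nonnegative supermartingales,
\[
Pr\!\left(\max_{0\le k\le T_d} M_k \ge 1\right) \le \mathbb{E}[M_0],
\]
and bound the right-hand side using $x_0\in\mathcal{C}$ together with \eqref{eq:prop1_1}: $\mathbb{E}[M_0] = B(x_0) + c T_d \le \gamma + c T_d$. Then I would show that any excursion outside $\mathcal{C}$ entails $\max_k M_k \ge 1$: if $x[k^\ast]\notin\mathcal{C}$ for some $k^\ast\le T_d$, so that $x[k^\ast]\in\mathcal{D}$, then by \eqref{eq:prop1_2} we have $B(x[k^\ast])\ge 1$, and hence $M_{k^\ast} \ge 1 + c(T_d-k^\ast) \ge 1$. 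Combining the two gives $Pr(x[k]\in\mathcal{C}\text{ for all }0\le k\le T_d) \ge 1 - \gamma - c T_d$, as claimed.

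The principal technical point, rather than a genuine obstacle, is choosing the time shift $cT_d$ so that $M_k$ is simultaneously nonnegative (so Ville's inequality applies) and at least $1$ on the unsafe event (so the threshold $\lambda=1$ is meaningful). A minor secondary check is that \eqref{eq:prop1_3} continues to govern the process even after the trajectory potentially leaves $\mathcal{C}$; this holds because the hypothesis is stated for every $x\in\mathcal{X}$ and every $\|\hat{u}\|\le\xi$, so no stopping-time argument is needed to restrict to $\mathcal{C}$.
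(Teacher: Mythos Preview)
Your argument is correct and is, in substance, the same route the paper takes: the paper invokes the barrier-certificate bound from \cite{jagtap2020formal} to get
\[
Pr\Big\{\sup_{0\le k<T_d} B(x[k])\ge 1\Big\}\le B(x_0)+cT_d\le \gamma+cT_d,
\]
whereas you unpack that citation by explicitly building the shifted nonnegative supermartingale $M_k=B(x[k])+c(T_d-k)$ and applying Ville's inequality at threshold $1$. One small point of emphasis differs: the paper's proof spends its first lines verifying, via the triangle inequality and the estimate $\|K_c(x-\hat{x}_\alpha)\|\le\|K_c\|\bar{\zeta}_\alpha$, that the \emph{effective} perturbation $u_o-K_c x$ appearing in \eqref{eq:prop1_3} has norm at most $\xi$ (since the implemented input uses $K_c\hat{x}_\alpha$, not $K_c x$); you allude to this but do not spell out that the ``$\hat{u}$'' in \eqref{eq:prop1_3} is $K_c(\hat{x}_\alpha-x)+\hat{u}$ rather than the controller's additive term alone. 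Making that identification explicit would match the paper and close the only gap in your write-up.
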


\begin{proof}
    We have $u_o - u = K_c x - K_c \hat{x}_\alpha + \hat{u}$. Since $\|\hat{u}\|\leq \xi- K_c \bar{\zeta}_\alpha$ and $\|K_c x - K_c \hat{x}_{\alpha}\|\leq K_c \bar{\zeta}_\alpha$. By triangle inequality, we can have $\|u_o-u\|\leq \xi$. 
    Since there exists a function $B(x)$ satisfying \eqref{eq:prop1_1} to \eqref{eq:prop1_3}, $B(x)$ is a control barrier certificate for system \eqref{eq:dynamic}. According to \cite{jagtap2020formal} and \eqref{eq:prop1_2}, we have
    \begin{multline*}
        Pr \left\{x[k] \in \mathcal{D}\right.\text{for some }\left.0 \leq k<T_{d} \mid x(0)=x_{0}\right\} \\
        \leq Pr\left\{\sup _{0 \leq k<T_{d}} B(x[k]) \geq 1 \mid x(0)=x_{0}\right\} \\
        \leq B\left(x_{0}\right)+c T_{d} \leq\gamma+c T_{d}. 
    \end{multline*}
\end{proof}
We define  $h_{B}^{\xi}(\hat{u})=(\xi-K_c\bar{\zeta}_\alpha)^{2}-\| \hat{u}\|_2^{2}$. The system has continuous state space $\mathcal{X}$ and action space $U$, we can follow the standard procedure to compute control barrier certificate $B(x)$ by solving an SOS programming given as follows:

\begin{proposition}
\label{prop:bcsos}
Suppose there exist a function $B(x)$ and polynomials $\lambda_0(x)$, $\lambda_1(x)$, $\lambda_x(x,\hat{u})$ and $\lambda_{\hat{u}}(x,\hat{u})$ such that
\begin{align}
    -B(x)-\lambda_{0}(x) h_{0}(x)+\gamma &\text{ is SOS} \label{eq:prop2_1}\\
    B(x)+\lambda_{1}(x) h_{0}(x)-1 &\text{ is SOS} \label{eq:prop2_2}\\
    -\mathbb{E}[B(f(x) + g(x) (K_c x  + \hat{u}) + w) \mid x] +  \nonumber\\
    B(x)-\lambda(x)h(x) -\lambda_{\hat{u}}(x, \hat{u}) h_{B}^{\xi}(\hat{u}) +c &\text{ is SOS} \label{eq:prop2_3}
\end{align}
then for any initial state $x_{0} \in \mathcal{C}$, we have the $Pr(x[k]\in C, 0\leq k\leq T_d)\geq 1-\gamma- c T_d$.
\end{proposition}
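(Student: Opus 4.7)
The plan is to reduce Proposition~\ref{prop:bcsos} to Proposition~\ref{prop:discrete_bc} via the standard S-procedure: I will show that each SOS constraint \eqref{eq:prop2_1}--\eqref{eq:prop2_3} implies the corresponding pointwise inequality \eqref{eq:prop1_1}--\eqref{eq:prop1_3}, after which the conclusion is immediate. Throughout, I will treat the Lagrange multipliers $\lambda_0,\lambda_1,\lambda,\lambda_{\hat u}$ as SOS polynomials (which is the operative standing assumption when such constraints are posed as an SOS program); this nonnegativity is what makes the S-procedure go through.

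First, consider \eqref{eq:prop2_1}. Being a sum of squares, $-B(x)-\lambda_0(x)h_0(x)+\gamma$ is nonnegative on $\mathbb{R}^n$, so $\gamma-B(x)\geq \lambda_0(x)h_0(x)$. On $\mathcal{C}=\{h_0(x)\geq 0\}$, since $\lambda_0(x)\geq 0$, the right-hand side is nonnegative, giving $B(x)\leq\gamma$ and hence \eqref{eq:prop1_1}. The same template handles \eqref{eq:prop2_2}$\Rightarrow$\eqref{eq:prop1_2}: identifying the unsafe set $\mathcal{D}$ with $\{x\in\mathcal{X}:h_0(x)\leq 0\}$ (the natural complement of $\mathcal{C}$ inside $\mathcal{X}$), the SOS condition yields $B(x)\geq 1-\lambda_1(x)h_0(x)\geq 1$ on $\mathcal{D}$ because $\lambda_1\geq 0$ and $h_0\leq 0$.

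Next, for \eqref{eq:prop2_3}$\Rightarrow$\eqref{eq:prop1_3}, SOS nonnegativity gives
\begin{equation*}
B(x)+c-\mathbb{E}[B(f(x)+g(x)(K_c x+\hat u)+w)\mid x] \ \geq\ \lambda(x)h(x)+\lambda_{\hat u}(x,\hat u)h_B^{\xi}(\hat u).
\end{equation*}
For $x\in\mathcal{X}$ we have $h(x)\geq 0$, and for $\hat u$ with $\|\hat u\|_2\leq \xi-\|K_c\|\bar\zeta_\alpha$ we have $h_B^{\xi}(\hat u)\geq 0$; combined with $\lambda,\lambda_{\hat u}\geq 0$, the right-hand side is nonnegative. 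This is exactly the conditional supermartingale-type inequality \eqref{eq:prop1_3} required by Proposition~\ref{prop:discrete_bc}, with the nominal controller $u=K_c x$ and any admissible correction $\hat u$. Invoking the argument that precedes Proposition~\ref{prop:discrete_bc} (which bounds $\|u_o-u\|$ by $\xi$ via the triangle inequality applied to the FT-estimation error bound from Theorem~\ref{th:ftest}) ensures that the synthesized input stays in the regime where \eqref{eq:prop1_3} is valid.

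Having verified \eqref{eq:prop1_1}--\eqref{eq:prop1_3}, I apply Proposition~\ref{prop:discrete_bc} directly to conclude $Pr(x[k]\in\mathcal{C},\ 0\leq k\leq T_d)\geq 1-\gamma-cT_d$ for every $x_0\in\mathcal{C}$. The only real subtlety is not a calculation but a modeling point: the sign conventions used in the S-procedure force an implicit assumption that the multipliers are themselves SOS (so that $\lambda_i h_i$ has a definite sign on the set carved out by $h_i$), and that $\mathcal{D}$ is the polynomially described unsafe complement of $\mathcal{C}$; once those are fixed the remaining steps are bookkeeping. I expect the argument to be short because Proposition~\ref{prop:bcsos} is essentially a certificate-synthesis restatement of Proposition~\ref{prop:discrete_bc}.
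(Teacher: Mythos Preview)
Your proposal is correct and follows essentially the same route as the paper: both arguments use the S-procedure to show that the SOS constraints \eqref{eq:prop2_1}--\eqref{eq:prop2_3} imply the pointwise conditions \eqref{eq:prop1_1}--\eqref{eq:prop1_3} of Proposition~\ref{prop:discrete_bc}, and then invoke that proposition for the probability bound. Your write-up is in fact a bit more explicit than the paper's (e.g., you spell out how \eqref{eq:prop2_2} yields \eqref{eq:prop1_2} on $\mathcal{D}=\{h_0\leq 0\}$ and make the standing SOS assumption on the multipliers explicit), but there is no substantive difference in approach.
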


\begin{proof}
Since the entries $B(x)$ and $\lambda_{0}(x)$ in $-B(x)-\lambda_{0}(x) h_{0}(x)+\gamma$ are SOS, we have $0 \leq B(x)+\lambda_{0}(x) h_{0}(x) \leq \gamma$. Since the term $\lambda_{0}(x) h_{0}(x)$ is nonnegative over $\mathcal{C}$, \eqref{eq:prop2_1} and \eqref{eq:prop2_2} implies \eqref{eq:prop1_1} and \eqref{eq:prop1_2} in Proposition \ref{prop:discrete_bc}. 
Since the terms $\lambda_{\hat{u}}(x) h_{B}^{\xi}(\hat{u})$ and $\lambda(x) h(x)$ are nonnegative over set $\mathcal{X}$, we have \eqref{eq:prop1_3} holds, which implies that the function $B(x)$ is a control barrier certificate.
\end{proof}

The choice of $\xi$ uses a similar approach as \cite{niu2019lqg} by solving the SOS program offline to enhance the scalability. Other numerical issues of SOS program such as sparsity and ill-conditioned problem are investigated in \cite{cotorruelo2021reference,kojima2005sparsity}.

We propose Alg. \ref{alg:FTControl} to compute feasible control inputs to ensure safety at each time-step $k$. We initialize $I_a\gets \emptyset$ and define $\Omega_{i\in I\backslash I_a}:= \{ u_o: (u_o-u_i)^T (u_o-u_i) \leq \xi\}$. At each time-step $k$ we maintain $n_l$ state estimations for sensors in $I_l$ and compute control input $u_i:=\pi(\hat{x}_{i})$ with a nominal controller. 
We compute $u_o$ by solving \eqref{eq:qcqp}, where $J$ is a cost function. If no such $u_o$ exists, we perform Alg. \ref{alg:FT-LiDAR_Est} and fault tolerant state estimation to remove conflicting sensors.
\begin{algorithm}[hb]
\caption{Fault Tolerant Control}
\begin{algorithmic}[1]
    \State \textbf{Init: } $I_a\gets \emptyset$ and 
    $\Omega_{i\in I\backslash I_a}:= \{ u_o: (u_o-u_i)^T (u_o-u_i) \leq \xi\}$
    \State Maintain $n_l$ EKFs for each sensor to estimate state $\hat{x}_i, \ i\in I_l = \{1,2,\ldots, n_l\}$.
    \State Compute control input $u_i:=\pi(\hat{x}_{i})$. 
    \If{control input $u \in \bigcap_{i\in I\backslash I_a} \Omega_{i}$}
        \State set $\hat{u}=0$ and $u_o = u+\hat{u}$
    \Else
        \Comment{STEP 1} 
        \State Compute control input $\hat{u}$ such that $u_o:=u+\hat{u}$ is the solution to the following problem. 
        \begin{equation}
        \label{eq:qcqp}
            {\min_{u_o} J(\hat{x}_{i},u_o)} \
            s.t. \  u_o \in \cap_{i\in I\backslash I_a} \Omega_{i}
        \end{equation}
        \If{no such $u_o$ can be found}
        \Comment{STEP 2}
            \State Perform FT-LiDAR Estimation (Alg. \ref{alg:FT-LiDAR_Est}).
            \State Exclude false sensors into $I_a$ by criteria I and II.
            \State Compute $\hat{u}$ by solving \eqref{eq:qcqp}.
            \If{no such $u_o$ can be found}
                \Comment{STEP 3} 
                \For{$u\notin \bigcap_{i \in I\backslash I_a} \Omega_{i}$}
                    \State Compute residue values $y_{i}-o (\hat{x}_{i})$
                    \State Include $i$ into $I_a$ with the largest residue. 
                \EndFor
            \EndIf
        \EndIf
    \EndIf
    
\end{algorithmic}
\label{alg:FTControl}
\end{algorithm}

\begin{figure*}[!htbp]
\centering
\begin{subfigure}{.24\textwidth}
  \centering
  \includegraphics[width = \textwidth]{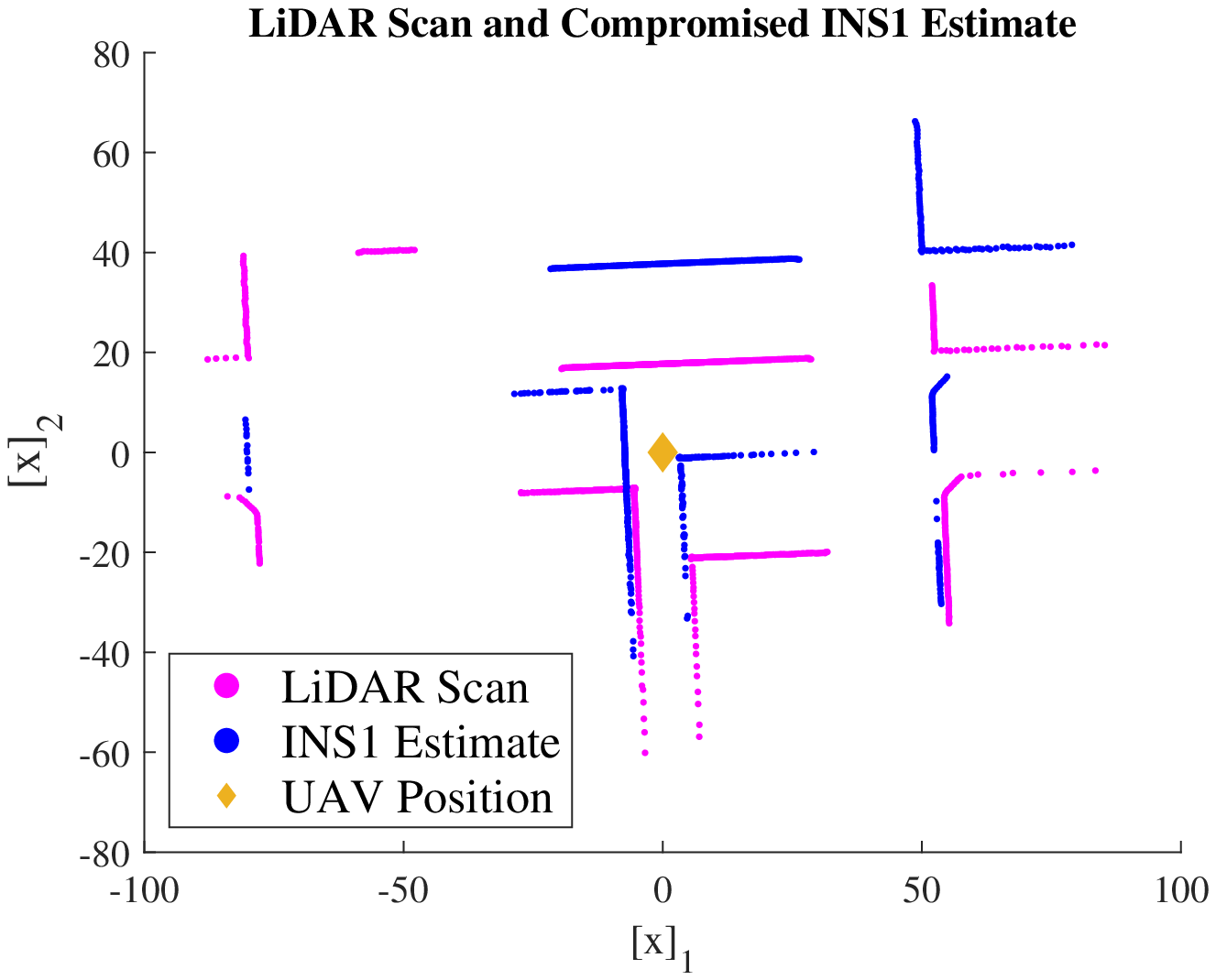}
  \subcaption{}
  \label{fig:s1x1}
\end{subfigure}%
\hfill
\begin{subfigure}{.24\textwidth}
  \centering
  \includegraphics[width = \textwidth]{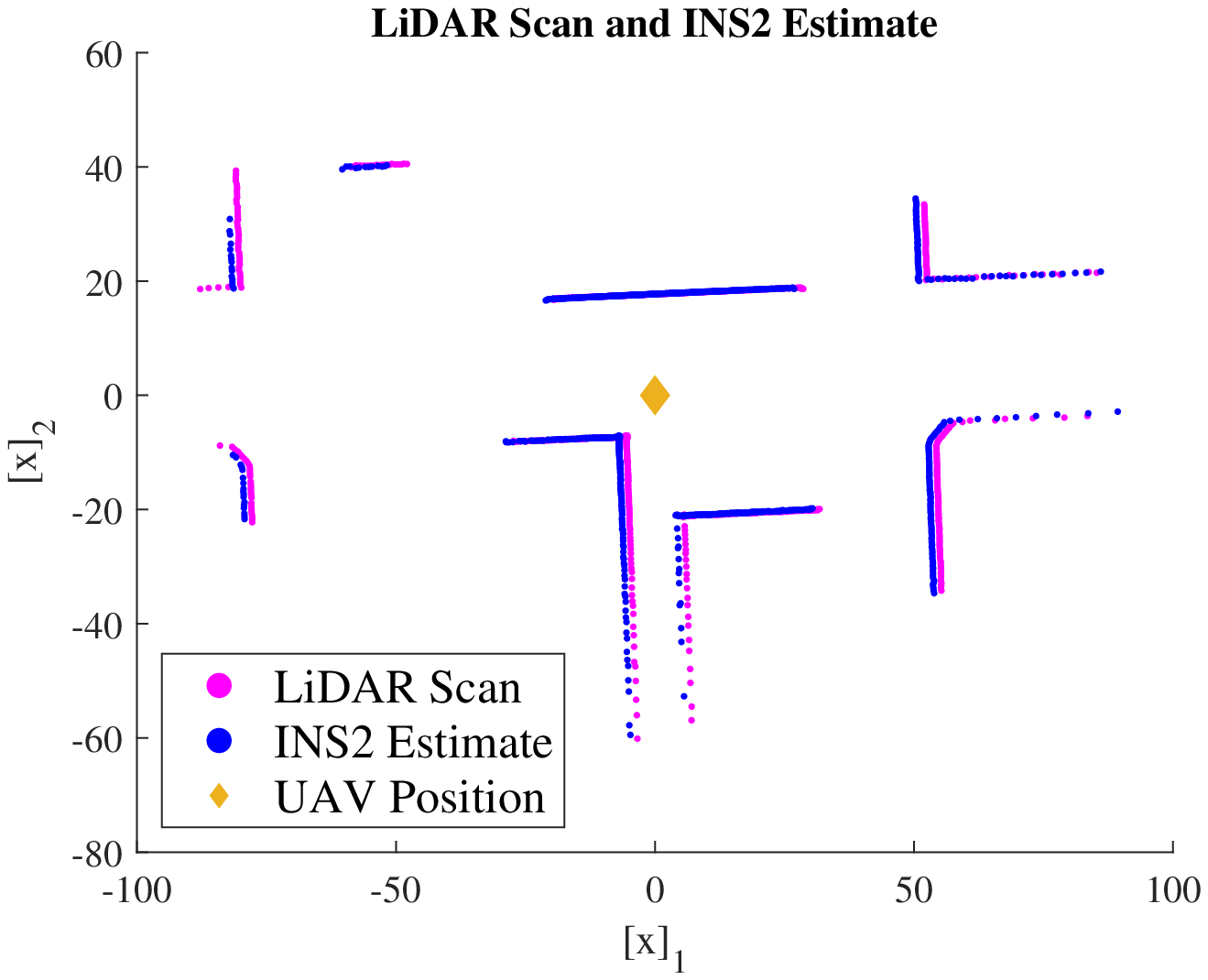}
  \subcaption{}
  \label{fig:s1x2}
\end{subfigure}%
\hfill
\begin{subfigure}{.24\textwidth}
  \centering
  \includegraphics[width=\textwidth]{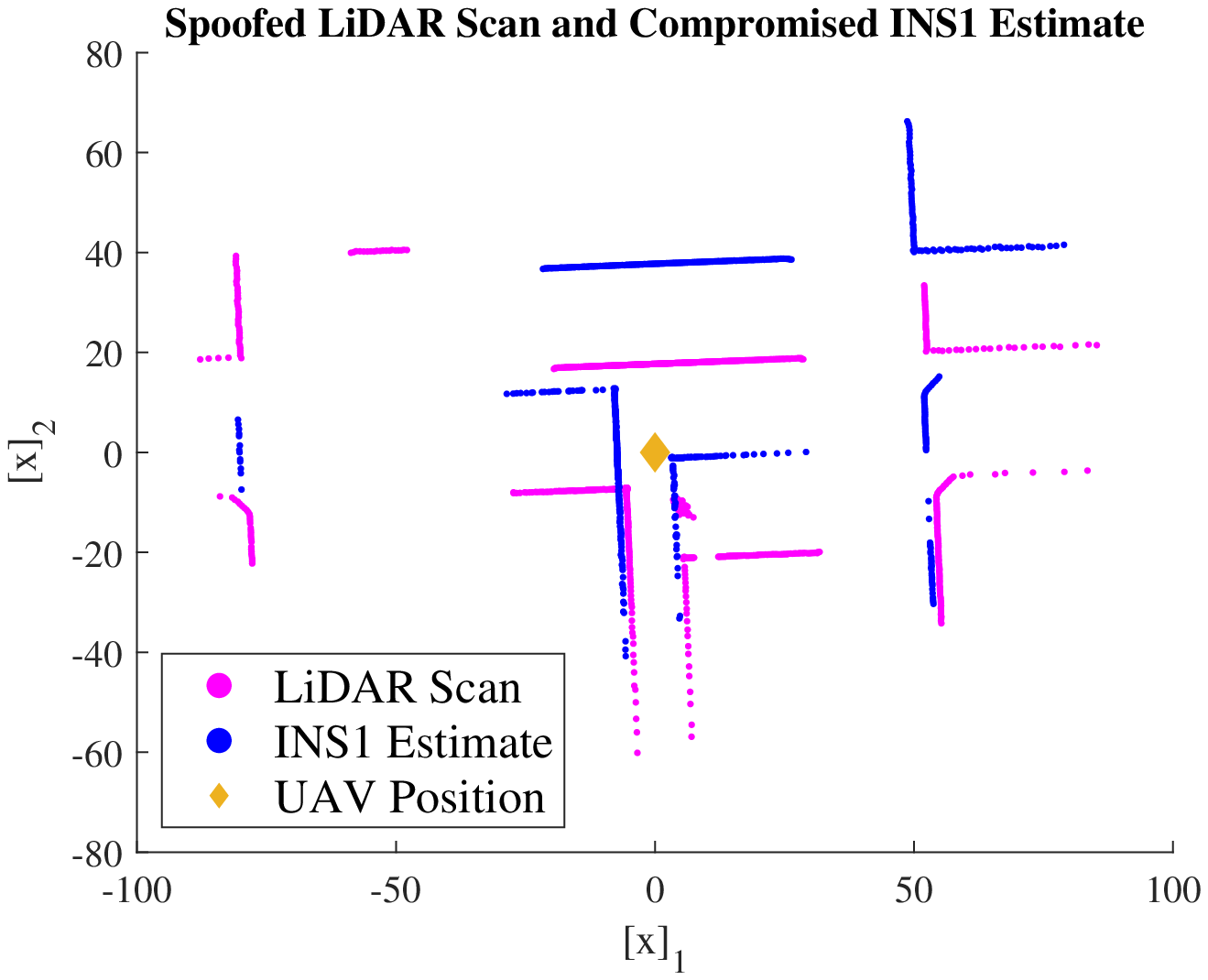}
  \subcaption{}
  \label{fig:s2x1}
\end{subfigure}%
\hfill
\begin{subfigure}{.24\textwidth}
  \centering
  \includegraphics[width=\textwidth]{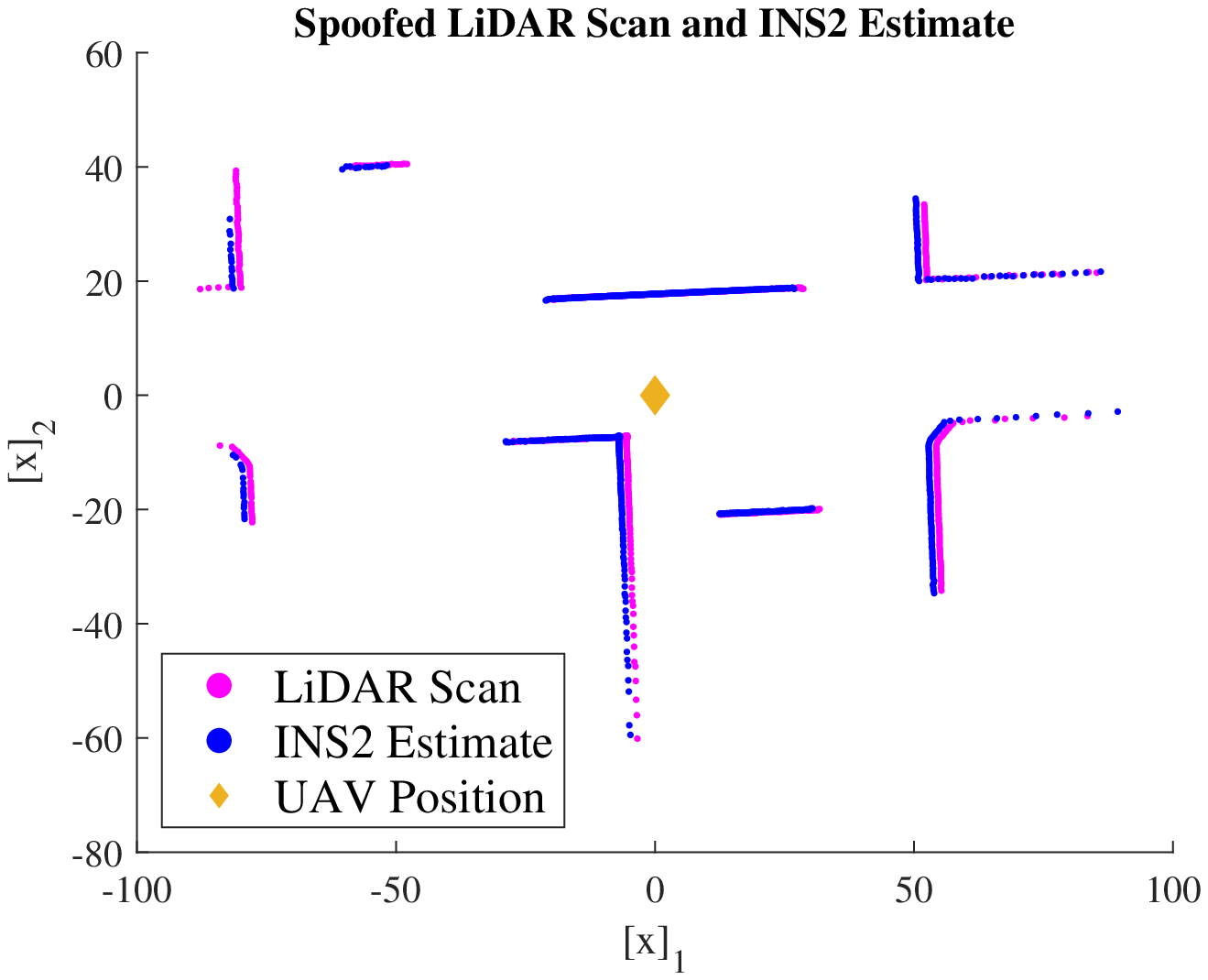}
  \subcaption{}
  \label{fig:s2x2}
\end{subfigure}%
\caption{Comparison between the estimated LiDAR observations (blue lines) and actual LiDAR observations (pink lines). Fig. \ref{fig:s1x1} to \ref{fig:s1x2} compares the estimated and actual LiDAR observations under attack Scenario I (INS1 compromised). The estimate based on INS1 deviates from the actual scan, causing the compromised sensor INS1 to become untrusted.  Fig. \ref{fig:s2x1} to \ref{fig:s2x2} compares the estimated and actual LiDAR observations under attack Scenario II (INS1 and LiDAR compromised).
Fig. \ref{fig:s1x1} and Fig. \ref{fig:s2x1} estimate the LiDAR scan using the compromised measurements from INS1. Fig. \ref{fig:s1x2} and Fig. \ref{fig:s2x2} estimate the LiDAR scan using the measurements from INS2. The proposed approach removes the spoofed obstacle and aligns with the non-compromised sensor INS2.}
\end{figure*}

\begin{theorem}
Given a safe set $\mathcal{C}$ and $\bar{\zeta}_s$, if the following conditions hold: (i) Assumption 1 holds, and (ii) scan match results $r$ and $\tilde{r}$ can be found at each time step $k$, and (iii) there exists a function $B(x)$ satisfying the conditions in Proposition \ref{prop:discrete_bc},
then we have $Pr( x_k \in \mathcal{C},\ \forall 0\leq k \leq T)\geq 1-\gamma- c T$ when the adversary is present. 
\end{theorem}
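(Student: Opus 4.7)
The plan is to chain together the three hypotheses so that the probabilistic barrier-certificate estimate of Proposition \ref{prop:discrete_bc} can be applied to the true state trajectory, even though the controller acts on estimates and sensor measurements may be corrupted.

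First, I would use Assumption \ref{assump:dtekf} together with Theorem 3.2 of \cite{reif1999stochastic} to conclude that, for every index set in $I_l$, the EKF estimation error $\zeta_i = x - \hat{x}_i$ is exponentially bounded in mean square and almost surely bounded whenever the initial error satisfies $\|\zeta_0\| \le \bar{\zeta}$. In particular, for each uncompromised sensor configuration there exists a finite bound $\bar{\zeta}_i$. Next, because condition (ii) guarantees that the scan match results $r_i$ and $\tilde{r}_i$ can be obtained by Algorithm \ref{alg:FT-LiDAR_Est} at every step, Theorem \ref{th:ftest} applies: for every $i \in I\setminus I_a$ returned by the fault tolerant state estimation criteria, we have $\|x - \hat{x}_i\| \le \bar{\zeta}_i$. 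Consequently the index $\alpha$ chosen in Section \ref{sec:CBC} yields a trusted estimate $\hat{x}_\alpha$ with $\|x[k] - \hat{x}_\alpha[k]\| \le \bar{\zeta}_\alpha$ at every time step $k \le T$.

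With this estimation-error bound in hand, I would translate the implemented control signal $u_o[k] = \pi(\hat{x}_\alpha[k]) + \hat{u}[k]$ into the form required by Proposition \ref{prop:discrete_bc}. Writing $\pi(x) = \pi_0 + K_c x$ and using $\|\hat{u}[k]\| \le \xi - \|K_c\|\bar{\zeta}_\alpha$, the triangle inequality together with $\|K_c(x-\hat{x}_\alpha)\| \le \|K_c\|\bar{\zeta}_\alpha$ gives $\|u_o[k] - \pi(x[k])\| \le \xi$. Hence the closed loop executed by the true state evolves under a perturbed nominal controller whose deviation lies inside the admissible ball $\{\hat{u}:\|\hat{u}\|\le \xi\}$ appearing in inequality \eqref{eq:prop1_3}.

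Finally, condition (iii) furnishes a barrier certificate $B$ satisfying \eqref{eq:prop1_1}--\eqref{eq:prop1_3} uniformly over that admissible control ball. I would then invoke Proposition \ref{prop:discrete_bc} directly, together with the supermartingale-type concentration bound used in its proof (the inequality from \cite{jagtap2020formal}), to conclude $\mathrm{Pr}(x[k]\in\mathcal{C},\ 0\le k\le T)\ge 1-\gamma - cT$ for any initial state $x_0 \in \mathcal{C}$. The main obstacle I anticipate is the bookkeeping at step two, namely verifying that the index $\alpha$ used by Algorithm \ref{alg:FTControl} always lies in $I\setminus I_a$ at every time step; this requires combining the feasibility guarantees of STEPS 1--3 of Algorithm \ref{alg:FTControl} with condition (ii) so that $\hat{x}_\alpha$ is well defined and the error bound from Theorem \ref{th:ftest} is in force for every $k\le T$. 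Once this is secured, the remainder is a direct application of Proposition \ref{prop:discrete_bc}.
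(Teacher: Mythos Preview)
Your proposal is correct and follows essentially the same route as the paper: use conditions (i) and (ii) together with Theorem~\ref{th:ftest} to secure the bound $\|x-\hat{x}_i\|\le\bar{\zeta}_i$ for $i\in I\setminus I_a$, observe that the resulting control deviation lies in the ball $\{\hat{u}:\|\hat{u}\|\le\xi\}$, and then invoke Proposition~\ref{prop:discrete_bc} via condition (iii). The paper's proof is terser and folds the triangle-inequality step into the proof of Proposition~\ref{prop:discrete_bc} itself, but the logical skeleton is identical to yours.
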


\begin{proof}
Given condition (i), (ii), and $\bar{\zeta}_s$, by Theorem \ref{th:ftest},  $\|x-\hat{x}_i\|\leq \bar{\zeta}_i$ for each sensor $i\in I\backslash I_a$. In Alg. \ref{alg:FTControl}, $u$ is computed by a nominal controller and $\hat{u}$ is computed by solving \eqref{eq:qcqp}. 
By condition (iii) and Proposition \ref{prop:discrete_bc}, we have $Pr(x[k]\in C, 0\leq k\leq T_d)\geq 1-\gamma- c T_d$.
\end{proof}

\section{Case Study}\label{sec:simulation}

This section evaluates our proposed approach on a UAV delivery system in an urban environment.
The UAV system is based on MATLAB UAV Package Delivery Example \cite{UAVDelivery}. 
The UAV adopts stability, velocity and altitude control modules, rendering its position control dynamics to be:
\begin{multline}
    \label{eq:uav_dyna}
    \begin{bmatrix}
    [x]_1\\
    [x]_2
    \end{bmatrix}_{k+1}
    =
    \begin{bmatrix}
    1 & -4.29\times 10^{-5}\\
    -1.47\times 10^{-5} & 1
    \end{bmatrix}
    \begin{bmatrix}
    [x]_1\\
    [x]_2
    \end{bmatrix}_{k}\\
    +
    \begin{bmatrix}
    0.0019 & -1.93\times 10^{-5}\\
    -2.91\times 10^{-4} & 0.0028
    \end{bmatrix}
    \begin{bmatrix}
    [u]_1\\
    [u]_2
    \end{bmatrix}_{k},
\end{multline}
where $x[k]=[[x]_1,[x]_2]^T$ is the UAV position, $[x]_1$ and $[x]_2$ represent the position of UAV on $X$-axis and $Y$-axis, respectively.
The UAV has one LiDAR sensor and two inertial navigation system (INS) sensors, denoted as INS1 and INS2. The UAV maintains two EKFs associated with each INS sensor to estimate its position at each time $k$, denoted as $\hat{x}_1[k]$ and $\hat{x}_2[k]$, respectively. The system operates in the presence of an adversary who can compromise one of the INS sensors and spoof the LiDAR sensor.


\begin{figure}[b]
    \centering
    \includegraphics[width = 0.33\textwidth]{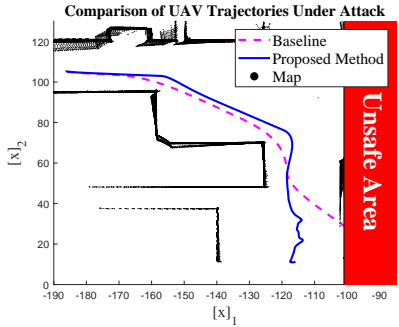}
    \caption{Comparison of trajectories of the UAV when controlled using our proposed approach and the baseline.}
    \label{fig:cs2_ftframe}
\end{figure}

We compare our proposed approach with a baseline utilizing a PID controller with state estimations given by INS1. We first demonstrate how our proposed approach selects sensors via Alg. \ref{alg:ScanReconstruct}  and Alg. \ref{alg:FT-LiDAR_Est} to obtain an accurate state estimation. We consider two attack scenarios. In Scenario I, the adversary compromises INS1 to deviate the measurement by $-20$ meters along the $X$-axis. In Scenario II, the adversary spoofs both the LiDAR sensor and INS1. The adversary biases INS1 sensor by $-20$ meters on $X$-axis and generates a random obstacle in the LiDAR scan within range of $[10,15]$ meters and angle of $[-70,-60]$ degrees.

We present the estimated and actual LiDAR observations under Scenario I in Fig. \ref{fig:s1x1}-\ref{fig:s1x2}. In Fig. \ref{fig:s1x1}, we note that the estimated LiDAR observations $\mathcal{O}(\hat{x}_1,\mathcal{M})$ generated using state estimation $\hat{x}_1$ from INS1 significantly deviates from the actual LiDAR observations (the scan in pink color). The estimated LiDAR observations $\mathcal{O}(\hat{x}_2,\mathcal{M})$ align with the actual one as shown in Fig. \ref{fig:s1x2}, which satisfies the criteria given in Section \ref{sec:ft-state est}. Therefore, we treat INS2 as a trusted sensor while ignoring the measurements from INS1 when computing control input to the UAV.

We next compare the estimated and actual LiDAR observations under Scenario II in Fig. \ref{fig:s2x1}-\ref{fig:s2x2}. The adversary manipulates the  LiDAR observations by injecting a set of false points around position $(5.5, -11.6)$.
In Fig. \ref{fig:s2x1}, we observe a significant drift between the estimated LiDAR observations $\mathcal{O}(\hat{x}_1,\mathcal{M})$ and actual LiDAR observations $\mathcal{O}(x,S)$. In Fig. \ref{fig:s1x2}, the obstacle points contained in sector $c$ generated by the LiDAR spoofing attack are eliminated by Alg. \ref{alg:FT-LiDAR_Est}, and thus the estimated LiDAR observations $\mathcal{O}(\hat{x}_2,\mathcal{M}\backslash c)$ aligns with the LiDAR observations $\mathcal{O}(x,S\backslash c)$. In this case, our proposed fault tolerant estimation indicates that INS1 should be ignored and INS2 can be trusted.

We finally present the trajectories of the UAV with our proposed fault tolerant control (Alg. \ref{alg:FTControl}) and with the baseline. We present the trajectory of our proposed approach in Fig. \ref{fig:cs2_ftframe} as the solid blue line, and the trajectory of the baseline as the dashed pink line.
We observe that our proposed approach ensures the UAV to successfully avoid all obstacles and the unsafe area, whereas the baseline leads to safety violation due to lack of schemes to exclude faulty measurements.
\section{Conclusion}\label{sec:conclusion}

In this paper, we studied the problem of safety-critical control for a LiDAR-based system in the presence of sensor faults and attacks. 
We considered the class of systems equipped with a set of sensors for state and environment observations. 
We proposed a fault tolerant safe control framework for such systems to estimate their states and synthesize a control signal with safety guarantee. To obtain an accurate state estimate, we maintain a set of EKFs computed from different subsets of sensor measurements. 
For each estimate, we construct a simulated LiDAR scan based on the state estimates and an \emph{a priori} known map, and exclude the state estimates that conflict with LiDAR measurements. When the LiDAR scan deviates from all of the state estimates, we remove the sector of the scan with the largest deviation. We proposed a control policy that selects a control input based on the fault tolerant estimate, and proved safety with a bounded probability using a control barrier certificate. 
We validated our proposed method with simulation studies on a UAV delivery system in an urban environment. 
Future work will extend the approach to cases with moving obstacles. 

\bibliographystyle{IEEEtran}
\bibliography{mybib}

\end{document}